\def\norm#1{\left\| #1 \right\|}
\newtheorem{definition}{Definition}[section]
\newtheorem{thm}{Theorem}[section]
\newtheorem{proposition}[thm]{Proposition}
\newtheorem{lemma}[thm]{Lemma}
\newtheorem{corollary}[thm]{Corollary}
\newtheorem{exam}{Example}[section]
\def\bea{\begin{IEEEeqnarray}{rCl}} 
\def\eea{\end{IEEEeqnarray}}
\def\beq{\begin{equation}}
\def\eeq{\end{equation}}
\def\bean{\begin{IEEEeqnarray*}{rCl}} 
\def\eean{\end{IEEEeqnarray*}}
\newtheorem{remark}{Remark}[section]
\DeclareMathOperator*{\tr}{Tr}
\providecommand{\abs}[1]{\ensuremath{\left\lvert #1 \right\rvert}}
\providecommand{\norm}[1]{\ensuremath{\left\Vert #1 \right\Vert}}
\newcommand{\Z}{\mathbb{Z}}
\newcommand{\C}{\mathbb{C}}
\newcommand{\R}{\mathbb{R}}
\providecommand{\abs}[1]{\ensuremath{\left\lvert #1 \right\rvert}}
\providecommand{\norm}[1]{\ensuremath{\left\Vert #1 \right\Vert}}
\title{ Remarks on criteria for achieving  the  optimal diversity-multiplexing gain trade-off } 
\author{Roope Vehkalahti, \emph{Member, IEEE}
\thanks{The research of R. Vehkalahti is funded by   Academy of Finland  grant  \#252457.}
\thanks{R. Vehkalahti is with the Department of Mathematics, FI-20014, University of Turku, Finland  (e-mail: roiive@utu.fi)}}
\begin{document}

\maketitle
\begin{abstract}
In this short note we will prove that non-vanishing determinant (NVD) criterion is not enough for an asymmetric space-time block  code (STBC) to achieve the optimal diversity-multiplexing gain trade-off (DMT). This result is in contrast to the recent  result made by Srinath and Rajan.  In order to clarify the  issue  further the approximately  universality criterion by Tavildar and Viswanath is  translated into language of lattice theory and some conjectures are given.
\end{abstract}

\section{Introduction}
When the diversity-multiplexing gain was introduced in 2003 in \cite{ZT} by Zheng and Tse, the only  code they could prove to achieve the optimal DMT was the Alamouti code. Even the Alamouti code achieved the optimal curve only when it was received with a single antenna. In \cite{EKPKL} Elia et al. proved, when translated to lattice terms, that for a  $2n^2$-dimensional lattice code in $M_n(\C)$ to achieve the optimal diversity-multiplexing gain trade-off it is enough that the code has the non-vanishing determinant property. They also pointed out that the division algebra based codes such as the perfect codes \cite{BORV} are DMT optimal and gave a general construction for DMT achieving  $2n^2$-dimensional lattice codes in $M_n(\C)$.

At the same time Tavildar and Viswanath \cite{TV} did come up with a more general version of this criterion. On lattice theoretic language their result  states that if the products of  the smallest  $n_r$ singular values of any non-zero matrix in  $2n_rn_t$-dimensional lattice $L\subset M_n(\C)$ stays above some fixed constant, then  $L$ achieves the optimal DMT curve in the $n_t\times n_r$ MIMO channel. In the case where $n_r=n_t$, this criterion coincides with the NVD condition.

The NVD condition considers only the case where the lattice $L$ is $2n^2$-dimensional in $M_n(\C)$. However, in  the scenario where $n_t>n_r$, from the decoding complexity point of view, it desirable to use  lattice space-time codes that are at maximum  $2n_tn_r$-dimensional.  Less than $2n_tn_r$-dimensional lattice, on the other hand would, be waste of receiving signal space and energy. Therefore  $2n_tn_r$-dimensional lattice for the $n_t\times n_r$ MIMO channel would be desirable. We refer to such code as an asymmetric space-time code. In such case the criterion \cite{TV} is currently the only general criterion for achieving the optimal DMT.

However, it seem to be that when $n_t>n_r$ asymmetric codes satisfying this conditions are very rare. It is also known that there are space-time codes that are DMT optimal despite satisfying the criterion in \cite{TV} \cite{VeHoLuLa}. It would therefore be nice   to have an easily applicable DMT criterion, that would not be as strict as the criterion in \cite{TV}.

Very recently  in \cite{SriRa} the authors claimed, when translated into lattice theoretic language,  that any $2n_rn_t$-dimensional lattice code $L$ in $M_{n_t\times n_t}(\C)$ does achieve the optimal DMT, if the code has the NVD property. This result would have proved large classes of space-time codes to be DMT-optimal.

In this paper we will give a simple example of a code that fulfill their criterion, but is not DMT optimal. This result suggests that, unfortunately, the Theorem 2 in  \cite{SriRa} is incorrect. Based on our examples  and the recent analysis of several NVD codes in \cite{VeLuLu}, we instead conjecture that actually almost all asymmetric codes with minimum delay are suboptimal from the DMT point of view.

Before the counter-example \ref{example} we translate the  criterion of \cite{TV} to lattice theoretic language. 

\subsection{Diversity-multiplexing gain trade-off}

Consider a Rayleigh block fading MIMO channel with $n_t$ transmit and $n_r$ receive antennas. The channel is assumed to be fixed for a block of $T$ channel uses, but vary in an independent and identically distributed (i.i.d.) fashion to vary from one block to another. Thus, the channel input-output relation can be written as 
\begin{equation}\label{channel}
Y=HX +N, 
\end{equation}
where $H \in M_{n_r \times n_t}(\C)$ is the channel matrix and $N\in M_{n_r \times T}(\C) $ is the noise matrix. The entries of $H$ and $N$ are assumed to be  i.i.d. zero-mean complex circular symmetric Gaussian random variables with variance 1. $X \in M_{n_t\times T}(\C)$ is the transmitted codeword, and $\rho$ denotes the signal-to-noise ratio (SNR).

\begin{definition}\label{def:stbc}
A {\em space-time block code} (STBC) $C$  for some designated  signal to noise ratio (SNR) level $\rho$ is a set of $n_t \times T$ complex matrices satisfying the following average power constraint
\begin{equation}\label{powernorm}
\frac{1}{\abs{C}}\sum_{X \in C} \norm{X}_F^2 \leq \rho T . 
\end{equation}
 A coding scheme $\{ C(\rho)\}$ of STBC is a family of STBCs, one at each SNR level. The rate for the code $C(\rho)$ is thus $R(\rho)=\frac{1}{T} \log \abs{C(\rho)}$. 
\end{definition}

 We say the coding scheme $\{C(\rho)\}$ achieves the DMT of {spatial multiplexing gain} $r$ and \emph{diversity gain} $d(r)$ if the rate satisfies
\begin{equation}\label{ratedemand}
\lim_{\rho \to \infty} \frac{R(\rho)}{\log(\rho)} = r,
\end{equation}
and the average error probability is such that
\[
P_e(\rho) \ \doteq \ \rho^{-d(r)},
\]
where by the dotted equality we mean $f(M) \doteq g(M)$ if 
\beq
\lim_{M\to \infty}\frac{\log(f(M))}{\log(M)} = \lim_{M\to \infty}\frac{\log(g(M))}{\log(M)}. \label{eq:dotdefn}
\eeq
Notations such as $\dot\geq$ and $\dot\leq$ are defined in a similar way.

With the above, the main result in \cite{ZT} is the following.
\begin{thm}[DMT \cite{ZT}] \label{thm:DMT}
Let $n_t$, $n_r$, $T$, $\{C(\rho)\}$, and $d(r)$ be defined as before. Then any STBC coding scheme $\{C(\rho)\}$ has error probability lower bounded by
\beq
P_e(r) \ \dot\geq\ \rho^{-d^*(r)} \label{eq:DMT1}
\eeq
or equivalently, the diversity gain
\beq
d(r) \leq d^*(r), \label{eq:DMT2}
\eeq
when the coding is limited within a block of $T$ channel uses. The function of the optimal diversity gain $d^*(r)$, also termed the optimal DMT, is a piece-wise linear function connecting the points $(r,(n_t-r)(n_r-r))$ for $r=0,1,\ldots,\min\{n_t,n_r\}$. 
\end{thm}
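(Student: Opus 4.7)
The plan is a two-step information-theoretic argument: first bound the coding error probability below by the channel outage probability, and then compute the SNR exponent of the outage probability via a large-deviation analysis of the channel singular values.

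First I would establish, for any STBC coding scheme satisfying the power constraint \eqref{powernorm}, the bound $P_e(\rho) \dot\geq P_{\text{out}}(\rho, R(\rho))$, where
\[
P_{\text{out}}(\rho, R) \;=\; \inf_{Q\succeq 0,\ \tr(Q)\leq\rho}\Pr\!\left[\log\det(I_{n_r}+HQH^{*}) < R\right].
\]
The standard route conditions on the realization of $H$, applies Fano's inequality to the resulting Gaussian MIMO sub-channel (whose capacity is dictated by the instantaneous covariance permitted by the power constraint), and then averages over $H$. With $R(\rho)=r\log\rho$ one checks that replacing the infimum by the isotropic choice $Q=(\rho/n_t)I$ does not alter the SNR exponent, so it suffices from this point on to analyse $\Pr[\log\det(I_{n_r}+\rho HH^{*})<r\log\rho]$.

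Second, order the eigenvalues $\sigma_1\leq\cdots\leq\sigma_m$ of $HH^{*}$ (with $m=\min\{n_t,n_r\}$) and reparametrize $\sigma_i=\rho^{-\alpha_i}$, so that $\alpha_1\geq\cdots\geq\alpha_m$ tracks the near-singular directions at high SNR. The classical complex Wishart joint density of the eigenvalues, after this change of variables and up to polynomial prefactors, scales as
\[
f_\alpha(\alpha) \;\doteq\; \rho^{-\sum_{i=1}^m (2i-1+|n_t-n_r|)\alpha_i}
\]
on the cone $\alpha_1\geq\cdots\geq\alpha_m\geq 0$. Since $\log(1+\rho\sigma_i)\doteq (1-\alpha_i)^{+}\log\rho$, the outage event becomes $\sum_i(1-\alpha_i)^{+}<r$, and Laplace's principle applied to the integral of $f_\alpha$ over this region yields $P_{\text{out}}(\rho,r\log\rho)\doteq\rho^{-d_{\text{out}}(r)}$, with
\[
d_{\text{out}}(r) = \inf\Bigl\{\textstyle\sum_{i=1}^m (2i-1+|n_t-n_r|)\alpha_i \;:\; \alpha_1\geq\cdots\geq\alpha_m\geq 0,\ \sum(1-\alpha_i)^{+}\leq r\Bigr\}.
\]
Since the weights $2i-1+|n_t-n_r|$ are increasing in $i$, the optimum of this linear program packs its ``budget'' onto the smallest-index coordinates: at integer $r\in\{0,1,\ldots,m\}$ the minimizer is $\alpha_1=\cdots=\alpha_{m-r}=1$ and $\alpha_{m-r+1}=\cdots=\alpha_m=0$, with value $(\min(n_t,n_r)-r)(\max(n_t,n_r)-r)=(n_t-r)(n_r-r)$. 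Convexity of the feasible region and linearity of the objective then identify $d_{\text{out}}(r)$ with the piecewise linear interpolation of these integer values, which is exactly $d^{*}(r)$.

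The hardest step is the converse: one must handle arbitrary codebooks (not Gaussian inputs) satisfying only the \emph{average} power constraint \eqref{powernorm}, and show that their error-probability SNR exponent is no better than that of the Gaussian outage. This is achieved by applying Fano's inequality to the uniformly chosen codeword together with a truncation of the input covariance to absorb the averaging in \eqref{powernorm}. One must also justify that the polynomial prefactors in the Wishart density do not spoil the Laplace estimate near the boundary $\alpha_i=0$, so that the resulting exponent is tight rather than a mere upper bound; the remainder of the argument reduces to a routine change of variables and a finite-dimensional linear program.
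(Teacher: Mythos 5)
This theorem is not proved in the paper at all: it is Zheng and Tse's result, quoted from \cite{ZT} as background, so there is no in-paper proof to compare against. Your sketch is, in substance, the original Zheng--Tse argument: the outage lower bound on $P_e$ via Fano's inequality, the reduction to the isotropic input $Q=(\rho/n_t)I$ at the level of SNR exponents, the reparametrization $\sigma_i=\rho^{-\alpha_i}$ of the Wishart eigenvalues giving the exponent $\sum_i(2i-1+|n_t-n_r|)\alpha_i$, and the linear program over $\{\sum_i(1-\alpha_i)^+\leq r\}$ whose value at integer $r$ is $(n_t-r)(n_r-r)$. The outline is correct. Two points deserve more care if you were to write it out in full. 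First, the converse step is genuinely delicate: Fano's inequality conditioned on the channel gives a lower bound on the conditional error probability only on the event that the instantaneous mutual information falls below the rate, and one must argue (as in Lemma 5 of \cite{ZT}) that this event's probability dominates the exponent and that the average power constraint \eqref{powernorm}, rather than a per-codeword constraint, does not help; you flag this but do not resolve it. Second, ``convexity of the feasible region and linearity of the objective'' shows only that $d_{\mathrm{out}}(r)$ is convex in $r$, hence lies \emph{below} the chords joining the integer points; to conclude it equals the piecewise-linear interpolation you still need the matching lower bound, obtained by explicitly solving the linear program for non-integer $r$ (the optimizer perturbs exactly one coordinate $\alpha_{m-\lfloor r\rfloor}$ fractionally). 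Neither issue reflects on the present paper, which simply uses the theorem as a benchmark for its counterexample.
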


\subsection{Matrix Lattices and their coding schemes}\label{latticesection}
  
  In this section we describe how one can turn a matrix lattice $L \subseteq M_n(\C)$ into a coding scheme  that satisfies the rate \eqref{ratedemand} and average energy \ref{powernorm} demands.
 Throughout the paper will consider STBCs with the minimum delay $n_t = T = n$, and therefore these codes live in the space $M_n(\C)$.

\begin{definition}
A {\em matrix lattice} $L \subseteq M_n(\C)$ has the form
$$
L=\Z B_1\oplus \Z B_2\oplus \cdots \oplus \Z B_k,
$$
where the matrices $B_1,\dots, B_k$ are linearly independent over $\R$, i.e., form a lattice basis, and $k$ is
called the \emph{rank}  or the \emph{dimension} of the lattice.
\end{definition}

\begin{definition}\label{def:NVD}
If the minimum determinant of the lattice $L \subseteq M_n(\C)$ is non-zero, i.e. it satisfies
\[
\inf_{{\bf 0} \neq X \in L} \abs{\det (X)} > 0, 
\]
we say that the lattice satisfies the \emph{non-vanishing determinant} (NVD) property.
\end{definition}

Let $\norm{X}_F = \sqrt{\tr(X^\dagger X)}$ denote the Frobenius norm of $X$.

let us now introduce two coding schemes based on a $k$-dimensional lattice $L$ inside $M_{n}(\C)$. 
\begin{definition}[Spherical shaping]

Given a positive real number $M$ we define
$$
L(M)=\{a \in L \;:\; \norm{a}_F \leq M, a\neq {\bf 0} \}.
$$
We will also use the notation
$$
B(M)=\{a  \in M_n(\C) \;:\; \norm{a}_F \leq M \}
$$
for the sphere with radius $M$.
\end{definition}

\begin{definition}[]
The finite single user codes used in the actual transmission are  then of form
$$
L (M) = \left\{ \sum_{i=1}^{k} b_{i} B_{i} | b_{i} \in \Z, -M \leq b_{i} \leq M \right\},
$$
where $M$ is a given positive number.
\end{definition}

The following two results are well known.
\begin{lemma}\label{spherical}
Let $L$ be a  $k$-dimensional lattice in  $M_{n}(\C)$ and
$L(M)$ be defined as above; 
then
\[
|L(M)|= cM^{k}+ O(M^{k-1}),
\]
where $c$ is some  positive  constant, independent of $M$. 
\end{lemma}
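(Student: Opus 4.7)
The plan is to reduce the problem to the classical lattice point count inside a convex body. Identify $M_n(\C) \cong \R^{2n^2}$ in the usual way (real and imaginary parts of entries), so that the Frobenius norm $\|\cdot\|_F$ is just the standard Euclidean norm on $\R^{2n^2}$. Since $B_1,\dots,B_k$ are $\R$-linearly independent, the real span $V = \Span_\R\{B_1,\dots,B_k\}$ is a $k$-dimensional linear subspace of $\R^{2n^2}$, and $L$ is a full-rank lattice inside $V$ whose covolume $\mathrm{covol}(L)$ equals the absolute value of the determinant of the Gram matrix $G = (\langle B_i,B_j\rangle_F)_{i,j}$ raised to the power $1/2$.

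Next I would apply the standard geometry-of-numbers estimate to the set $\Omega = \{v \in V : \|v\|_F \leq 1\}$, the closed Euclidean unit ball of $V$. This set is convex, compact, symmetric about the origin, and its boundary is a $(k-1)$-dimensional sphere, hence Lipschitz. By the classical result on counting lattice points in dilates of such a body (see e.g.\ any standard reference on the geometry of numbers), one has
\[
|L \cap M\Omega| \;=\; \frac{\Vol_k(\Omega)}{\mathrm{covol}(L)}\,M^k \;+\; O(M^{k-1}),
\]
where $\Vol_k$ denotes $k$-dimensional Lebesgue measure on $V$ and the implied constant depends only on $L$ and the Lipschitz constant of $\partial\Omega$. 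The error term comes from the usual argument that any lattice point in $M\Omega$ whose fundamental parallelepiped is not fully contained in $M\Omega$ must lie within a fixed Euclidean distance of the boundary $\partial(M\Omega)$, and the number of such lattice points is bounded by a constant multiple of the $(k-1)$-dimensional surface area of $\partial(M\Omega)$, which scales as $M^{k-1}$.

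Finally, since $L(M) = (L \cap M\Omega)\setminus\{\mathbf{0}\}$, we obtain
\[
|L(M)| \;=\; \frac{\Vol_k(\Omega)}{\mathrm{covol}(L)}\,M^k \;+\; O(M^{k-1}),
\]
the subtraction of the origin being absorbed into the error term. The constant $c = \Vol_k(\Omega)/\mathrm{covol}(L)$ is strictly positive because $\Omega$ has nonempty interior in $V$ and $\mathrm{covol}(L)$ is finite, proving the lemma.

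The only potentially delicate point is the boundary estimate leading to the $O(M^{k-1})$ term; but since the body in question is a Euclidean ball in $V$ (an ellipsoid in the ambient $\R^{2n^2}$ restricted to the subspace), its boundary is smooth and the standard argument applies verbatim, so no real obstacle arises.
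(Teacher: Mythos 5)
Your proof is correct. The paper itself offers no argument here---it simply labels the lemma ``well known''---and what you have written is precisely the standard geometry-of-numbers proof one would cite: pass to the $k$-dimensional real span $V$ of the basis matrices, view $L$ as a full-rank lattice in $V$ with covolume $\sqrt{\det G}$, and count lattice points in the dilates $M\Omega$ of the unit ball of $V$, with the boundary-cell argument giving the $O(M^{k-1})$ error and the removal of the origin absorbed into it. The one point worth stating explicitly if you write this up is which definition of $L(M)$ you are using, since the paper gives two (the spherical shell and the coefficient box $-M\le b_i\le M$); your argument handles the spherical one, and the box version follows by the same counting applied to the parallelepiped $\left\{\sum t_iB_i : |t_i|\le 1\right\}$ in place of $\Omega$, so nothing is lost either way.
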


In particular, it follows that we can choose real constants $K_1$ and $K_2$ such that
\begin{equation}\label{latticepointbound}
K_1M^k\geq |L(M)|\geq K_2 M^k.
\end{equation}

\begin{lemma}\label{Mclaurin}
Let $L$ be a $k$-dimensional lattice in $M_n(\C)$. Then
\begin{eqnarray*}
 s_2 M^{k+2}\leq \sum_{X\in L(M)} \norm{X}_F^2\leq s_1M^{k+2},
\end{eqnarray*}
where $s_1$ and $s_2$ are constants independent of $M$.
\end{lemma}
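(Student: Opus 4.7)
The plan is to establish the two bounds separately, relying on Lemma \ref{spherical} (in its asymptotic form) for the counting estimates, since every summand $\norm{X}_F^2$ is essentially pinned between $0$ and $M^2$ once we know $X \in L(M)$.

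For the upper bound, the argument is immediate: every $X \in L(M)$ satisfies $\norm{X}_F \leq M$, so
\[
\sum_{X\in L(M)} \norm{X}_F^2 \;\leq\; M^2\,\abs{L(M)} \;\leq\; K_1 M^{k+2},
\]
using the right-hand side of \eqref{latticepointbound}. Thus $s_1 := K_1$ works.

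For the lower bound the plan is a single dyadic shell argument. Set
\[
S(M) := L(M)\setminus L(M/2) = \{X\in L : M/2 < \norm{X}_F \leq M\}.
\]
Each point $X\in S(M)$ contributes at least $M^2/4$ to the sum, so the task reduces to lower-bounding $\abs{S(M)}$ by a constant multiple of $M^k$. Lemma \ref{spherical} gives
\[
\abs{S(M)} = \abs{L(M)} - \abs{L(M/2)} = cM^k\bigl(1 - 2^{-k}\bigr) + O(M^{k-1}),
\]
so for all $M$ larger than some $M_0$ we obtain $\abs{S(M)} \geq \tfrac{c}{2}(1-2^{-k})M^k$. Consequently
\[
\sum_{X\in L(M)} \norm{X}_F^2 \;\geq\; \frac{M^2}{4}\cdot\abs{S(M)} \;\geq\; \frac{c(1-2^{-k})}{8}\,M^{k+2}
\]
for $M\geq M_0$. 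For $M\in[M_1,M_0]$ (where $M_1$ is the smallest $M$ for which $L(M)\neq\emptyset$) the sum is a positive continuous function of $M$ bounded below by a positive minimum, so shrinking $s_2$ further absorbs this bounded range; for $M<M_1$ both sides are trivially zero so the inequality is vacuous. Taking $s_2$ equal to the minimum of these two constants yields the claim.

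The only delicate point is the lower bound, and specifically the need for the asymptotic $\abs{L(M)} = cM^k + O(M^{k-1})$ from Lemma \ref{spherical} rather than merely the two-sided estimate \eqref{latticepointbound}: with only $K_1,K_2$ in hand one cannot a priori guarantee $K_2 - 2^{-k}K_1 > 0$. Once the asymptotic with a single leading constant $c$ is invoked, the $(1-2^{-k})$ factor is automatically positive and the shell argument goes through.
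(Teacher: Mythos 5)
The paper offers no proof of this lemma at all --- it is simply asserted as ``well known'' --- so there is nothing of the author's to compare your argument against; your proposal supplies the missing proof. The core of it is correct and standard: the upper bound via $\sum_{X\in L(M)}\norm{X}_F^2\leq M^2\abs{L(M)}\leq K_1M^{k+2}$ is immediate, and the dyadic shell $L(M)\setminus L(M/2)$ together with the asymptotic of Lemma \ref{spherical} gives the matching lower bound of order $M^{k+2}$; your observation that the two-sided estimate \eqref{latticepointbound} alone cannot produce a positive shell count (since $K_2-2^{-k}K_1$ need not be positive) is exactly the right point of care. Two small inaccuracies remain in your handling of small $M$. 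First, $\sum_{X\in L(M)}\norm{X}_F^2$ is a nondecreasing step function of $M$, not a continuous one; what bounds it below on $[M_1,M_0]$ is simply that its value at $M=M_1$ is at least $M_1^2>0$ and it does not decrease. Second, for $0<M<M_1$ the two sides are \emph{not} both zero: the left-hand side $s_2M^{k+2}$ is strictly positive while the sum over the empty set $L(M)$ vanishes, so the lower bound is in fact false there for every $s_2>0$. This is a defect of the lemma as literally stated rather than of your argument --- the inequality can only hold for $M$ bounded away from $0$ --- and it is harmless for the paper, which invokes the lemma only through the dotted asymptotic equality with $M=\rho^{rn/k}\to\infty$.
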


With the above, we are now prepared to give a formal definition of a family of space-time lattice codes of finite size.
\begin{definition}
Given the lattice $L \subset M_n(\C)$, a space-time lattice coding scheme associated with $L$ is a collection of STBCs where each member is given by
\begin{equation}\label{codingscheme}
C_L(\rho)=\rho^{(1/2 -\frac{rn}{k})}L\left(\rho^{\frac{rn}{k}}\right)
\end{equation}
for the desired multiplexing gain $r$ and for each $\rho$ level.
\end{definition}

The normalization factor $\rho^{1/2-\frac{rn}{k}}$ in \eqref{codingscheme} is quite clearly enough for meeting the average power constraint \eqref{powernorm}, but one might wonder whether the STBC $C_L( \rho)$ has average power which is considerably lower than the power constraint in \eqref{powernorm}. From  Proposition \ref{Mclaurin} we have
$$
 \sum_{X \in L\left(\rho^{\frac{rn}{k}}\right)} \rho^{1-\frac{2rn}{k}} \norm{X}_F^2 \doteq \rho^{1-\frac{2rn}{k}}(\rho^{rn/k})^{k+2}=\rho\cdot\rho^{rn}.
$$
On the other hand we also have that $|L(\rho^{\frac{rn}{k}})| \doteq\rho^{rn}$ from Proposition \ref{spherical}. Combining the above shows that the code  $C_L( \rho)$ has the correct average power from the DMT perspective, i.e., in terms of the dotted equality.

\section{ Approximately universality}
In this section we shortly review the approximately universality (AU) criterion  \cite{TV} given in 2006 by Tavildar and Viswanath  and translate their results to consider the lattice based coding schemes introduced in the previous section.  In the introduction we referred to results in \cite{TV}, as a criterion for DMT, but AU is  a considerably stronger condition that only implies DMT. In particular a space-time code can be DMT optimal despite not being approximately universal.   We will only concentrate on the criterion given in \cite{TV} as a method to achieve the optimal DMT and will not describe AU more in this paper.

\begin{thm}\label{Visva}
A sequence of codes  of rate $R(C(\rho))$  is approximately universal over the $n\times n_r$ MIMO-channel if and only if, for every pair of code words
\begin{equation}
\lambda_{1}^2\cdots\lambda_m^2\geq\frac{1}{2^{R(\rho) + o(\log{\rho})}}
\end{equation}
where $\lambda_i$ is the smallest singular value of the normalized (by $\sqrt{\rho}$) codeword difference matrix for a pair codewords of $C(\rho)$ and 
$m=min(n_r, n)$.
\end{thm}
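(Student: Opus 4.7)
The plan is to prove both directions by analyzing the pairwise error probability $P(X \to X' \mid H)$ as a function of the channel $H$, since approximate universality is equivalent, informally, to demanding that the code decode reliably whenever $H$ is not in outage in the sense of \cite{ZT}, independently of the actual fading law. By the standard Bhattacharyya bound for ML decoding in Gaussian noise,
$$P(X \to X' \mid H) \;\dot\leq\; \exp\left(-\tfrac{\rho}{4}\norm{H\bar\Delta}_F^2\right),$$
where $\bar\Delta=(X-X')/\sqrt{\rho}$ is the normalized difference, whose $m$ smallest singular values are the $\lambda_i$. The key structural ingredient is a Horn--Weyl style multiplicative inequality linking the $m$ smallest singular values of $H\bar\Delta$ to those of $H$ and of $\bar\Delta$ separately; this is what will bring the product $\lambda_1^2\cdots\lambda_m^2$ into the analysis and handle both the $m=n$ (NVD-like) and $m=n_r<n$ regimes in a unified way.

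For the sufficiency direction I would parametrize the singular values of $H$ by their SNR exponents, $\sigma_i^2(H)=\rho^{-\alpha_i}$, so that the outage event at rate $R(\rho)=r\log\rho$ is $\{\sum_{i=1}^{m}(1-\alpha_i)^+<r\}$ up to subpolynomial factors \cite{ZT}. Combining the Horn--Weyl inequality with the hypothesis $\prod_{i=1}^{m}\lambda_i^2\;\dot\geq\;2^{-R(\rho)-o(\log\rho)}$ then shows that on the complement of the outage event one has $\rho\norm{H\bar\Delta}_F^2\;\dot\geq\;\rho^{\epsilon}$ for some $\epsilon>0$, so the conditional pairwise error is super-polynomially small. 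A union bound over the at most $|C(\rho)|^2\;\dot=\;\rho^{2rn}$ ordered codeword pairs still leaves the total error probability vanishing super-polynomially on the non-outage set, which is exactly the content of approximate universality.

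For the necessity direction I would argue by contraposition. If the singular-value inequality fails along a subsequence of SNRs for some pair $X,X'$, I would construct an adversarial channel $H$ whose right singular vectors are aligned with the weakest singular directions of $\bar\Delta$, and whose exponents $(\alpha_i)$ are chosen just large enough to drive $\rho\norm{H\bar\Delta}_F^2$ below a constant, thereby making $X$ and $X'$ confusable. A direct arithmetic using the failure of the hypothesis shows that the resulting $(\alpha_i)$ automatically satisfy $\sum_i(1-\alpha_i)^+\geq r$, so this $H$ lies \emph{outside} the outage set. Under any fading law (Rayleigh in particular) that places non-negligible mass on a shrinking neighborhood of such an $H$, the pairwise error probability at $X,X'$ already exceeds the DMT target $\rho^{-d^*(r)}$, which is a contradiction.

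The main obstacle I foresee is the necessity direction: one has to simultaneously align the singular vectors of the adversarial $H$ with those of $\bar\Delta$ and tune the exponent vector $(\alpha_i)$ so that $H$ stays out of outage while still collapsing $\rho\norm{H\bar\Delta}_F^2$ to $O(1)$. Absorbing the $o(\log\rho)$ slack cleanly through the Horn--Weyl inequality and the SNR-exponent change of variables is delicate and accounts for most of the technical effort; the sufficiency side, once the right singular-value inequality is in hand, reduces to routine bookkeeping of powers of $\rho$.
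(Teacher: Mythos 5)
The paper does not actually prove Theorem \ref{Visva}: it is quoted verbatim (in lattice-friendly form) from Tavildar and Viswanath \cite{TV}, so there is no internal proof to compare your sketch against. Judged against the known argument of \cite{TV}, your architecture (pairwise Gaussian bound, outage parametrization by SNR exponents, adversarial channel for necessity) is the right one, but two steps as written do not go through. First, in the sufficiency direction the claim that non-outage plus the hypothesis forces $\rho\norm{H\bar\Delta}_F^2\ \dot\geq\ \rho^{\epsilon}$ for a fixed $\epsilon>0$ is too strong: the hypothesis only controls the product of the $m$ smallest $\lambda_i^2$ up to a $2^{-o(\log\rho)}$ factor, and the worst-case received distance over non-outage channels is then only bounded below by $\rho^{-o(1)}$, not by a positive power of $\rho$. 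Moreover, the quantity that the non-outage constraint controls is $\log\det(I+\rho HH^{\dagger})$, not a product of the smallest singular values of $H$, so a Horn--Weyl factorization of the singular values of $H\bar\Delta$ is not the mechanism that produces the criterion; in \cite{TV} the product $\lambda_1^2\cdots\lambda_m^2$ emerges from an explicit constrained minimization (pair the weakest directions of $\bar\Delta$ against the strongest of $H$, then apply a Lagrangian/AM--GM step under the mutual-information constraint), and the residual $o(\log\rho)$ slack is absorbed into the definition of approximate universality rather than beaten by a polynomial margin.

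Second, and more seriously, your necessity direction proves too much. You conclude that if the singular-value condition fails then the code misses the DMT target \emph{under Rayleigh fading in particular}. That statement is false, and the present paper emphasizes exactly this point: approximate universality is a condition quantified over \emph{all} fading distributions, and there are codes (e.g.\ the diagonal number-field codes and the quasi-orthogonal codes cited in Section III) that violate the condition of Theorem \ref{Visva} yet are DMT optimal over the Rayleigh channel. The correct necessity argument only exhibits \emph{some} fading law --- one concentrating mass on a shrinking neighborhood of the adversarial non-outage channel aligned against the weak directions of $\bar\Delta$ --- under which the code fails; Rayleigh fading need not place enough probability near that channel to dominate the error exponent. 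As written, your contrapositive would contradict the known DMT optimality of those non-approximately-universal codes, so the quantifier over fading distributions must be restored before the argument is sound.
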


Here the notation $o(\log{\rho})$ refers to a function that is  dominated by $\epsilon log(\rho)$ for any $0<\epsilon$.

\begin{definition}
We will refer to the ith smallest singular value of the matrix $X$ with $\lambda_i(X)$
and set
$$
 \Delta_k(X)=\prod_{i=1}^{k}\lambda_i^2(X).
$$
\end{definition}
We can now extend this definition for lattices.
\begin{definition}
Let us suppose that $L$ is a lattice in $M_n(\C)$. We then note
$$
\Delta_r(L)=\mathrm{infimum}\{\Delta_r(X)| X\in L  \,\,X\neq 0 \}.
$$
\end{definition}

\begin{remark}
The  sentence normalized by $\sqrt{\rho}$ refers that each codeword in $C(\rho)$ is divided with $\sqrt{\rho}$.  
 \end{remark}

The result by Tavildar and Viswanath now transforms into the following.
\begin{corollary}\label{NVD}
Let us suppose that $L$ is a $2n_r n$-dimensional lattice code in $M_n(\C)$ and that 
$$
\Delta_{n_r}(L)\neq 0,
$$
then $L$ is approximately universal (and therefore DMT optimal), when received with $n_r$ antennas.
\end{corollary}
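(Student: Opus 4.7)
The plan is to apply Theorem~\ref{Visva} directly to the coding scheme $C_L(\rho) = \rho^{1/2 - rn/k}\,L(\rho^{rn/k})$ with $k = 2 n_r n$. First I would unpack the normalization: any two distinct codewords in $C_L(\rho)$, after dividing by $\sqrt{\rho}$, have difference of the form $\rho^{-rn/k}\,Y$, where $Y = X - X'$ is a nonzero element of $L$. The only lattice property needed about $Y$ is that it is nonzero; it need not lie inside the finite shaping region $L(\rho^{rn/k})$.

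Next, using the homogeneity of singular values, $\lambda_i(\alpha Y) = \alpha\,\lambda_i(Y)$ for $\alpha > 0$, I would write
\begin{equation*}
\prod_{i=1}^{n_r}\lambda_i^2\bigl(\rho^{-rn/k}Y\bigr) = \rho^{-2 r n n_r /k}\,\Delta_{n_r}(Y) \geq \Delta_{n_r}(L)\,\rho^{-r},
\end{equation*}
where the last bound uses the definition of $\Delta_{n_r}(L)$ as an infimum over $L\setminus\{0\}$ together with the exponent identity $2 r n n_r / k = r$, which holds precisely when $k = 2 n_r n$.

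On the rate side, Lemma~\ref{spherical} yields $|C_L(\rho)| = |L(\rho^{rn/k})| = c\,\rho^{rn} + O(\rho^{rn - rn/k})$, so the rate over $T = n$ channel uses is $R(\rho) = \frac{1}{n}\log_2|C_L(\rho)| = r\log_2\rho + O(1)$. The additive $O(1)$ from the rate together with the constant $\log_2(1/\Delta_{n_r}(L))$ both fit inside the $o(\log \rho)$ slack of Theorem~\ref{Visva}, so the criterion $\prod_{i=1}^{n_r}\lambda_i^2 \geq 2^{-R(\rho) - o(\log\rho)}$ is met uniformly over all pairs of codewords. Hence $L$ is approximately universal, and therefore DMT optimal.

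The argument is essentially a bookkeeping computation, so I do not expect a serious obstacle. The only point requiring care is that the dimension hypothesis $k = 2 n_r n$ is used in an essential way through the identity $2 r n n_r / k = r$, which aligns the exponential decay of the singular value product with the target $2^{-R(\rho)} \doteq \rho^{-r}$; for any other lattice dimension this balance would break, and the corollary could not be deduced from $\Delta_{n_r}(L) > 0$ alone.
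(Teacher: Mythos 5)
Your proposal is correct and follows essentially the same route as the paper's proof: normalize the codeword difference, use the homogeneity of singular values to pull out the factor $\rho^{-2rnn_r/k}=\rho^{-r}$, bound $\Delta_{n_r}(X-X')$ below by $\Delta_{n_r}(L)$ since $X-X'\in L\setminus\{\mathbf 0\}$, and absorb the remaining constants into the $o(\log\rho)$ slack of Theorem~\ref{Visva} via Lemma~\ref{spherical}. Your write-up is in fact slightly more careful than the paper's (you keep $\Delta_{n_r}(L)$ explicit rather than rescaling it to $1$, and you correctly account for the $1/T$ factor in the rate), but there is no substantive difference in the argument.
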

\begin{proof}
Let us now assume that we have scaled our lattice so that $\Delta_{n_r}(L)=1$.
The finite codes we now consider are of type $C_L(\rho)=\rho^{1/2-r/2n_r}L(r/2n_r)$.
For  given elements $\rho^{-r/2n_r+1/2}X$ and $\rho^{-r/2n_r+1/2}Y$ in $\rho^{-r/2n_r+1/2}L(r/2n_r)$ and adding the normalization $\sqrt{\rho}$ we  have 
$$
\Delta_{n_r}\left(\rho^{-r/2n_r+1/2}\left(\frac{X-Y}{\sqrt{\rho}} \right)\right)=\rho^{-r}\Delta_{n_r}(X-Y)\geq \rho^{-r}.
$$
The last inequality here follows as $X-Y\in L$ and we assumed that $\Delta_{n_r}(L)=1$.
On the other hand according to equation \eqref{latticepointbound}  we have $\frac{1}{2^{log(|(C_L(\rho))|)}} \leq \frac{1}{2^{log(A\rho^r)}}=\frac{1}{A\rho^r} $ for some constant
$A$ independent of $\rho$. We obviously  have that $A\in 2^{o(log(\rho)}$. 
\end{proof}

In other words we have the following.
\begin{corollary}
If  a coding scheme $C_L(\rho)$  based on $2n_rn_t$-dimensional lattice code $L$ fulfills
$$
\Delta_{n_r}(X)\geq c\rho^{n_t(1- \frac{r}{n_r})},
$$
for any non-zero codeword $X\in C_L(2\rho)$  any $\rho$ and some fixed constant $c$, then the coding scheme $C_L(\rho)$ is approximately universal.
\end{corollary}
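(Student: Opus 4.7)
The plan is to derive the corollary directly from Theorem \ref{Visva} (equivalently, from Corollary \ref{NVD}): I must verify the AU inequality on the $\sqrt{\rho}$-normalized differences of codewords in $C_L(\rho)$.

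First I observe that if $X,Y\in C_L(\rho)$, then $X-Y$ is, up to a fixed scalar factor independent of $\rho$, again an element of $C_L(2\rho)$. Writing $X=\rho^{1/2-r/(2n_r)}\sum b_iB_i$ and $Y=\rho^{1/2-r/(2n_r)}\sum b'_iB_i$ with integer coefficients in $[-\rho^{r/(2n_r)},\rho^{r/(2n_r)}]$, the coefficients of $X-Y$ lie in $[-2\rho^{r/(2n_r)},2\rho^{r/(2n_r)}]$, which differs from the coefficient range $(2\rho)^{r/(2n_r)}$ defining $C_L(2\rho)$ only by the $\rho$-independent factor $2^{1-r/(2n_r)}$; this can be absorbed harmlessly into $c$. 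The hypothesis then yields $\Delta_{n_r}(X-Y)\ge c\,\rho^{n_t(1-r/n_r)}$.

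Next I would apply the $\sqrt{\rho}$ normalization. Since $\Delta_{n_r}$ is homogeneous of degree $2n_r$ (product of $n_r$ squared singular values, each scaling linearly in the matrix),
$$\Delta_{n_r}\!\left(\frac{X-Y}{\sqrt{\rho}}\right) = \rho^{-n_r}\Delta_{n_r}(X-Y) \ge c\,\rho^{n_t(1-r/n_r)-n_r} = c\,\rho^{-r}\,\rho^{(n_t-n_r)(n_r-r)/n_r}.$$

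Finally I would match this against the AU threshold. Lemma \ref{spherical} gives $|C_L(\rho)|\doteq\rho^{rn_t}$, hence $R(\rho)=\frac{1}{n_t}\log|C_L(\rho)|=r\log\rho+o(\log\rho)$, and the right-hand side in Theorem \ref{Visva} satisfies $2^{-R(\rho)-o(\log\rho)}\doteq\rho^{-r}$. The AU inequality therefore reduces to $(n_t-n_r)(n_r-r)\ge 0$, which is automatic in the asymmetric regime $n_t\ge n_r$ for all multiplexing gains in the DMT-relevant range $0\le r\le n_r$ (outside of which the optimal DMT is already trivial). The only mild obstacle is the bookkeeping of the scalar constants in the first step, all of which are absorbed into $c$ or into the $o(\log\rho)$ slack inherent in Theorem \ref{Visva}.
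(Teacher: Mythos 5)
Your proof is correct and takes essentially the same route as the paper's: the corollary is stated there as a direct rephrasing of Corollary \ref{NVD}, whose proof is exactly your computation --- the degree-$2n_r$ homogeneity giving $\Delta_{n_r}\bigl((X-Y)/\sqrt{\rho}\bigr)=\rho^{-n_r}\Delta_{n_r}(X-Y)$, compared against the threshold $2^{-R(\rho)+o(\log\rho)}\doteq\rho^{-r}$ obtained from Lemma \ref{spherical}, with all scalar bookkeeping absorbed into constants lying in $2^{o(\log\rho)}$. Your extra observation that the stated exponent $n_t(1-r/n_r)$ overshoots the required $\rho^{-r}$ by $\rho^{(n_t-n_r)(n_r-r)/n_r}$, so the verification reduces to $(n_t-n_r)(n_r-r)\ge 0$ in the relevant regime $n_t\ge n_r$, $0\le r\le n_r$, is accurate and correctly notes that the hypothesis is in fact stronger than necessary when $n_t>n_r$.
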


\begin{remark}
The reader should note that approximately universality does allow vanishing product of singular values for a lattice code.
However, this vanishing must be in class $2^{o(log(\rho))}$. In particular vanishing with speed $\rho^{-\epsilon}$ is not allowed, with any fixed
$\epsilon$. 
\end{remark}

\begin{exam}
The Alamouti code together with QAM-symbols can be considered as a 4-dimensional lattice  code $L_{alam}\subset M_2(\C)$. For this code 
$\delta_1(L)>0$. Therefore the coding scheme $C_{L_{alam}}(\rho)$ is approximately universal when received with a single antenna.
\end{exam}

\begin{exam}
The division algebra based codes such as the Perfect codes \cite{BORV} and \emph{maximal order codes} \cite{MeKaikki} are $2n^2$-dimensional lattices in 
$M_n(\C)$ and have the NVD property and are therefore DMT optimal.
\end{exam}

However, in general it seem to be that it is extremely difficult to fulfill these conditions.
As a matter of fact we conjecture.
{\bfseries
\begin{itemize}
\item The conditions  of Corollary \ref{NVD}  can be satisfied only in the cases where either $n_r=n_t$ or when $n_t=2$ and  $n_r=1$.
\end{itemize}
}

\section{Failing of the NVD criteria in lower dimensions}
There are several codes that are DMT optimal despite not fulfilling the approximately universality criterion of the previous section.
 For example the diagonal number field codes \cite{DamenBelfi} and many of the fully diverse quasi-orthogonal codes such as those in \cite{SuXia}  are DMT optimal in the $n_t\times 1$ MIMO channel \cite{VeHoLuLa}. Seen as lattice codes, these are $2n$-dimensional lattices  in $M_n(\C)$ and have the NVD property.  However, they are not approximately universal \cite{EJ}.

It is a tempting idea that simply the NVD condition   and $2nn_r$-dimensional lattice $L \subseteq M_n(\C)$ would be enough for the coding scheme $C_L(\rho)$ to be DMT optimal. 

Using the previous notation we can state this in the form.
\begin{equation}\label{weakNVD}
\Delta_n(X)\geq c\rho^{n_t(1- \frac{r}{n_r})},
\end{equation}
for any $X$ in $C_L(\rho)$ and fixed  positive constant $c$.

However, this is not the case. Let us now build such a code for the  $4\times 1$ MISO channel  that  it fulfills the criteria
\eqref{weakNVD}, but is not DMT optimal in this channel.

Let us consider the Golden code $L_{gold}$. One can see it as an $8$-dimensional lattice in $M_2(\C)$. 
 
 As an NVD code the coding scheme where $\rho^{(1/2-2r/8)}L_{gold}(\rho^{2r/8}$.
Let us now consider a coding scheme, where we take Golden code $C_{gold}$ and then transform in into an $8$-dimensional NVD code $diag(L_{gold})$ in $M_4(\C)$ by setting
$$
diag(X,X)=
\begin{pmatrix}
X& \bf{0}\\
\bf{0} &X
\end{pmatrix},
$$
where, $X$ is a codeword of the Golden code and $\bf{0}$ is a $2\times2$ zero matrix. The set $diag(C_{gold})=\{diag(X)\,|\, X\in C_{gold}\}$ is then an 8-dimensional  NVD lattice code in $M_4(\C)$. However, in order to satisfy the energy normalization demands we have to consider
scheme  $\rho^{1/2-4r/8}diag(L_{gold})(\rho^{4r/8})=C_{L_{gold}}(\rho)$.
\begin{proposition}\label{example}
Let us suppose that $C_{gold}$ is the Golden code. Then the scheme
$\rho^{1/2-4r/8} diag(L_{gold})(\rho^{4r/8})$ is not a DMT optimal code in $4\times 1$ MISO channel.
\end{proposition}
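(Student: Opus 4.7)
The plan is to reduce the proposed scheme on the $4\times 1$ MISO channel to an equivalent transmission of the Golden code over a standard $2\times 2$ i.i.d.\ Rayleigh MIMO channel, and then to conclude suboptimality via the DMT converse (Theorem~\ref{thm:DMT}) applied to that effective channel.

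The first step is to exhibit a sufficient statistic. Writing $H=(h_1,h_2,h_3,h_4)$, set $H_1=(h_1,h_2)$ and $H_2=(h_3,h_4)$; for a Golden codeword $X\in M_2(\C)$ one then has $H\cdot\diag(X,X)=[\,H_1X\mid H_2X\,]$. Collecting the samples received during time slots $\{1,2\}$ and $\{3,4\}$ into a $2\times 2$ matrix yields
\[
\tilde Y \;=\; \tilde H\, X + \tilde N, \qquad \tilde H :=\begin{pmatrix}h_1 & h_2\\ h_3 & h_4\end{pmatrix}.
\]
Since $H$ and $N$ have i.i.d.\ $\mathcal{CN}(0,1)$ entries, so do $\tilde H$ and $\tilde N$. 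Hence the scheme $\rho^{1/2-r/2}\diag(L_{gold})(\rho^{r/2})$ on the $4\times 1$ MISO is statistically equivalent to the scheme $\rho^{1/2-r/2}L_{gold}(\rho^{r/2})$ used over a standard $2\times 2$ i.i.d.\ Rayleigh MIMO channel $\tilde H$ with $T'=2$ channel uses per block.

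Next comes the rate/multiplexing bookkeeping. By Lemma~\ref{spherical}, the $8$-dimensional lattice $\diag(L_{gold})$ truncated to the sphere of radius $\rho^{r/2}$ contains $\doteq\rho^{4r}$ codewords, i.e.\ $4r\log\rho$ information bits per block. Viewed on the effective $2\times 2$ channel (used twice per block), this corresponds to multiplexing gain $r'=2r$. By the converse in Theorem~\ref{thm:DMT} applied to $2\times 2$ Rayleigh MIMO, any scheme at multiplexing $r'$ satisfies $P_e\dot\geq\rho^{-d^*_{2\times 2}(r')}$, where the Zheng--Tse curve for $2\times 2$ is piecewise linear through $(0,4),(1,1),(2,0)$:
\[
d^*_{2\times 2}(r')\;=\;\begin{cases}4-3r', & r'\in[0,1],\\ 2-r', & r'\in[1,2].\end{cases}
\]
Substituting $r'=2r$ and combining with the equivalence of the previous paragraph yields the DMT upper bound
\[
d_{diag}(r)\;\leq\;d^*_{2\times 2}(2r)\;=\;\begin{cases}4-6r, & r\in[0,1/2],\\ 2-2r, & r\in[1/2,1].\end{cases}
\]

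Finally, comparison with the optimal DMT $d^*_{4\times 1}(r)=4(1-r)$ of the $4\times 1$ MISO channel shows $d^*_{2\times 2}(2r)<4(1-r)$ for every $r\in(0,1)$; for instance at $r=1/2$ one gets $d_{diag}(1/2)\leq 1$ against the optimum $2$. Thus $C_{L_{gold}}(\rho)$ is not DMT optimal. The delicate step in the argument is really the sufficient-statistic identification: one has to verify that $\tilde H$ and $\tilde N$ are genuine i.i.d.\ complex Gaussians so that the outage converse can be invoked on the effective channel exactly as in Theorem~\ref{thm:DMT}. Once this is settled, the remaining work is purely rate/SNR bookkeeping and a one-line comparison of piecewise-linear DMT curves.
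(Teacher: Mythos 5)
Your proposal is correct and follows essentially the same route as the paper: the identical rewriting of $h\cdot\mathrm{diag}(X,X)+n$ as a $2\times 2$ MIMO transmission of the Golden code, followed by the same $r\mapsto 2r$ multiplexing-gain conversion coming from $T=4$ versus $T'=2$, and a comparison against the $4\times 1$ MISO curve $4(1-r)$. The only (harmless, arguably cleaner) difference is that you invoke the Zheng--Tse converse on the effective $2\times 2$ channel to get an upper bound, whereas the paper cites the exact DMT achieved by the Golden code; either suffices to conclude suboptimality.
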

\begin{proof}
Let us now suppose that we transmit a codeword $diag(X)$, where 
$$
X=
\begin{pmatrix}
x_1& x_2\\
x_3& x_4
\end{pmatrix}.
$$

Let us  suppose that the channel vector is $h=[h_1, h_2, h_3, h_4]$ and the noise is $n=[n_1,n_2,n_3,n_4]$. We then have that
$$
h\cdot diag(X) +n=
$$
$$
[h_1x_1+h_2x_3, h_1x_2+h_2x_4, h_3x_1+ h_4x_3, h_3x_2+ h_4x_4] +n.
$$
But this is exactly
$$
\begin{pmatrix}
h_1& h_2\\
h_3& h_4
\end{pmatrix}
\begin{pmatrix}
x_1& x_2\\
x_3& x_4
\end{pmatrix}
+
\begin{pmatrix}
n_1& n_2\\
n_3& n_4
\end{pmatrix},
$$
just written differently. We can see that the error performance of  $ diag(L_{Gold})$, when received with a single antenna is exactly that of $L_{gold}$ when received with two antennas.  The DMT for  the coding scheme $\rho^{1/2-2r/8} L_{Gold}(\rho^{2r/8})$ is  the usual one consisting of  lines connecting points $[r,(2-r)(2-r)^+]$ for integer values. However, this is not directly the DMT for $\rho^{1/2-4r/8}diag(L_{gold})(\rho^{2r/8})$. This is due to the fact that for $diag(L_{gold})$ we have $T=4$ and therefore the diversity gain achieved with multiplexing gain $r$ in the $4\times 1$ channel corresponds to  diversity gain $d(2r)$ in the $2\times 2$ channel. We then see that the DMT is presented by a line connecting points of  $[r,(2-2r)(2-2r)^+]$, where $r=0,\frac{1}{2}, 1$. On the other hand the DMT of the $4+1$ MISO channel is simply a straight line between $[0,4]$ and $[1,0]$. 
\end{proof}

This result indeed proves that in the case where the dimension of the lattice is not $2n^2$ the NVD condition is not enough for the code to reach the optimal DMT.  

\begin{remark}
We note that the previous proposition is just an example of a general principle. We can do the same trick for example by using $4\times 4$ perfect code $L_{perf}$, which is a $32$-dimensional lattice in $M_4(\C)$, to produce 32-dimensional lattice code $diag(L_{perf})\subseteq M_{8\times 8}(\C)$ with the NVD property. The corresponding coding scheme $C_{diag(L_{perf})}(\rho)$ then has DMT curve presented by a line connecting points of   $[r,(4-2r)(4-2r)^+]$, where $r=0,\frac{1}{2}, 1,\frac32, 2$, in the $8\times 2$ MIMO channel. This curve is strongly suboptimal.
\end{remark}

In \cite{VeLuLu} the authors studied several division algebra based codes. The results gotten suggest that in fact it is actually quite rare that a division algebra based code is DMT optimal. Based on this study and the previous examples, we make the following conjecture.

{\bfseries
\begin{itemize}
\item There exist DMT optimal lattice codes $L \subseteq M_n(\C)$  of dimension $2nn_r$ in the $n\times n_r$ MIMO channel if and only if $n=n_t$ or  $n_r=1$.
\end{itemize}
}
This conjecture obviously implies almost completely the previous conjecture on approximately universal codes. However, this result is likely  lot harder to prove.

\end{document}